\documentclass{article}
\usepackage{spconf,amsmath,graphicx,mathtools,flexisym,array,float,multirow}
\newcolumntype{P}[1]{>{\centering\arraybackslash}p{#1}}
\usepackage[utf8]{inputenc}
\usepackage[english]{babel}
\usepackage{color}
\setlength{\parskip}{.2em}
\usepackage{framed}
\usepackage{cite}
\usepackage{amsmath,amssymb,amsfonts}
\usepackage{graphicx}
\usepackage{upgreek}
\usepackage{textcomp}
\usepackage{cite}
\usepackage{bbold}
\usepackage{ifpdf}
\usepackage{times}
\usepackage{layout}
\usepackage{float}
\usepackage{afterpage}
\usepackage{amsmath}
\usepackage{amstext}
\usepackage{amssymb,bm}
\usepackage{latexsym}
\usepackage{color}
\usepackage{kantlipsum}    

\usepackage{graphicx}
\usepackage{amsmath}
\usepackage{amsthm}
\usepackage{graphicx}

\usepackage{url}
\usepackage[font=scriptsize,caption=false,labelsep=space]{subfig}

\usepackage{booktabs}
\usepackage{multicol}
\usepackage{lipsum}

\usepackage{enumitem}
\usepackage[switch]{lineno}
\usepackage[named]{algo}
\usepackage[noend]{algpseudocode}
\usepackage{algorithm}

\usepackage{seqsplit}

\usepackage{amsmath}
\usepackage{amsthm}
\usepackage{dsfont}
\usepackage{thmtools}
\usepackage{amssymb}
\usepackage[dvipsnames]{xcolor}


\def\bs{\boldsymbol{s}}

\def\bx{\boldsymbol{x}}


\def\bG{\boldsymbol{G}}

\def\balpha{\boldsymbol{\alpha}}

\def\bzeta{\boldsymbol{\zeta}}

\def\btheta{\boldsymbol{\theta}}

\def\bmu{\boldsymbol{\mu}}

\def\bPhi{\boldsymbol{\Phi}}

\def\mba{\mathbf{a}}
\def\mbb{\mathbf{b}}
\def\mbc{\mathbf{c}}
\def\mbd{\mathbf{d}}

\def\mbs{\mathbf{s}}

\def\mbv{\mathbf{v}}
\def\mbw{\mathbf{w}}
\def\mbx{\mathbf{x}}
\def\mby{\mathbf{y}}
\def\mbz{\mathbf{z}}


\def\mbB{\mathbf{B}}
\def\mbC{\mathbf{C}}
\def\mbD{\mathbf{D}}
\def\mbE{\mathbf{E}}

\def\mbG{\mathbf{G}}
\def\mbH{\mathbf{H}}
\def\mbI{\mathbf{I}}

\def\mbP{\mathbf{P}}
\def\mbQ{\mathbf{Q}}
\def\mbR{\mathbf{R}}

\def\mbT{\mathbf{T}}

\def\mbV{\mathbf{V}}
\def\mbW{\mathbf{W}}
\def\mbX{\mathbf{X}}
\def\mbY{\mathbf{Y}}
\def\mbZ{\mathbf{Z}}


\def\bzero{\boldsymbol{0}}


\def\Tr#1{\mathrm{Tr}\left(#1\right)}
\def\vec#1{\mathrm{vec}\left(#1\right)}
\def\diag#1{\mathrm{diag}\left(#1\right)}
\def\Diag#1{\mathrm{Diag}\left(#1\right)}


\def\Re#1{\operatorname{Re}\left(#1\right)}
\def\Im#1{\operatorname{Im}\left(#1\right)}
\def\arg#1{\operatorname{arg}\left(#1\right)}



\newtheorem{theorem}{Theorem}
\newtheorem{proposition}{Proposition}

\newtheorem{remark}{Remark}




\def\T{\top}
\def\H{\mathrm{H}}
\def\j{\mathrm{j}}
\def\vecinv#1{\mathrm{vec}^{-1}\left(#1\right)}

\newcommand*{\rom}[1]{\expandafter\@slowromancap\romannumeral #1@}

\title{Joint Waveform and Passive Beamformer Design in Multi-IRS-Aided Radar}
\name{Zahra Esmaeilbeig$^1$$^\star$,  Arian Eamaz$^1$$^\star$, Kumar Vijay Mishra$^{\dagger}$, and Mojtaba Soltanalian$^\star$ \thanks{$ ~ ^1$Equal contribution. This work was sponsored in part by the National Science Foundation Grant  ECCS-1809225, and in part by the Army Research Office, accomplished under Grant Number W911NF-22-1-0263. The views and conclusions contained in this document are those of the authors and should not be interpreted as representing the official policies, either expressed or implied, of the Army Research Office or the U.S. Government. The U.S. Government is authorized to reproduce and distribute reprints for
Government purposes notwithstanding any copyright notation herein. }
}
\address{$^{\star}$ECE Department, University of Illinois Chicago, Chicago, IL 60607, USA\\
 $^{\dagger}$United States DEVCOM Army Research Laboratory, Adelphi, MD 20783, USA\vspace{-10pt}
	}
\ninept
\begin{document}
\maketitle
\begin{abstract}
Intelligent reflecting surface (IRS) technology has recently attracted a significant interest in non-light-of-sight radar remote sensing. 
Prior works have largely focused on designing single IRS beamformers for this problem. For the first time in the literature, this paper considers multi-IRS-aided multiple-input multiple-output (MIMO) radar and jointly designs the transmit unimodular waveforms and optimal IRS beamformers. To this end, we derive the Cram\'er-Rao lower bound (CRLB) of target direction-of-arrival (DoA) as a performance metric. Unimodular transmit sequences are the preferred waveforms from a hardware perspective. We show that, through suitable transformations, the joint design problem can be reformulated as two unimodular quadratic programs (UQP). To deal with the NP-hard nature of both UQPs, we propose \textit{u}nimodular waveform and \textit{be}amforming design for multi-IRS \textit{r}adar (UBeR) algorithm that takes advantage of the low-cost power method-like iterations. Numerical experiments illustrate that the MIMO waveforms and phase shifts obtained from our UBeR algorithm are effective in improving the CRLB of DoA estimation. 
\end{abstract}
\begin{keywords} 
Beamforming, IRS-aided radar, non-line-of-sight sensing, unimodular sequences, waveform design.
\end{keywords}
\vspace{-6pt}
\section{Introduction}
\vspace{-6pt}
An intelligent reflecting surface (IRS) is composed of a large array of scattering meta-material elements, which reflect the incoming signal after introducing a pre-determined phase shift~\cite{wu2019towards,ozdogan2019intelligent}. Recently, the benefits of IRS have been investigated for future wireless communications \cite{gong2020toward,hodge2020intelligent,hodge2022index} applications, 
including multi-beam design~\cite{torkzaban2021shaping}, 
 secure parameter estimation~\cite{ahmed2022joint} and joint sensing-communications ~\cite{mishra2022optm3sec,wei2022irs,elbir2022rise}. In this paper, we focus on the IRS-aided radar, where combined processing of line-of-sight (LoS) and non-LoS (NLoS) paths has shown improvement in target estimation and detection \cite{buzzi2022foundations,esmaeilbeig2022irs,song2022intelligent,dehkordi2021reconfigurable} 
through an optimal design of IRS phase shifts.  

Target detection via multiple-input multiple-output (MIMO) IRS-aided  radar was studied  extensively in~\cite{buzzi2022foundations}. In our earlier works on target estimation ~\cite{esmaeilbeig2022irs,esmaeilbeig2022cramer}, we derived the optimal IRS phase shifts based on the mean-squared-error of  the best linear unbiased estimator (BLUE) for complex target  reflection factor~\cite{esmaeilbeig2022irs} and the Cram\'er-Rao lower bound (CRLB) of Doppler  estimation for moving targets~\cite{esmaeilbeig2022cramer}. Recent studies~\cite{song2022intelligent,wang2022stars} focused on  optimization of IRS beamforming based on CRLB of direction-of-arrival (DoA) estimation for a  single  IRS-aided radar. 
More recent works \cite{esmaeilbeig2022cramer,wei2022multi} demonstrate the benefits of deploying multiple IRS platforms instead of a single IRS.

Similar to a conventional radar \cite{radarsignaldesign2022}, a judicious design of transmit waveforms  improves the performance of IRS-aided radar. Whereas designing radar probing signals is a well-studied problem~\cite{radarsignaldesign2022,li2017fast,BOSE2021,hu2016locating,soltanalian2013joint}, it is relatively unexamined for IRS-aided radar. In this context, transmit sequences that  mitigate the non-linearities of amplifiers and yield a uniform power transmission over time are of particular interest. Unimodular sequences with the minimum peak-to-average power ratio exhibit these properties and have been studied in previous non-IRS works for radar applications~\cite{hu2016locating}. In this paper, we jointly design unimodular sequences and IRS beamformers.  

Multipath propagation through multiple IRS platforms increases the spatial diversity of the radar system \cite{xu2021mimo}. 
To this end, we investigate the benefits of multipath processing for multi-IRS-aided target estimation. We first derive the CRLB of DoA estimation for a multi-IRS-aided radar. Then, we formulate the unimodular waveform design problem based on the CRLB minimization for IRS-aided radar as a unimodular quadratic program (UQP). The unimodularity constraint makes the UQP an NP-hard problem. In general, UQP may be relaxed via a semi-definite program (SDP) formulation but the latter has a high computational complexity as well\cite{naghsh2013doppler,9896984}. Inspired by the power method that has the advantage of simple matrix-vector multiplications, \cite{soltanalian2014designing,soltanalian2013joint} proposed \emph{power method like iterations} (PMLI) algorithm to approximate UQP solutions leading to a low-cost algorithm. We formulate the IRS beamforming design as a \emph{unimodular quartic programming} (UQ$^2$P). Prior works \cite{song2015sequence,li2017fast} on unimodular waveform design with good correlation properties 
also lead to UQ$^{2}$Ps, for which they employ a more costly  majorization-minimization technique. On the contrary, we use a quartic to bi-quadratic transformation to solve UQ$^{2}$P by splitting it into two quadratic subproblems. 
Our \emph{u}nimodular waveform and \emph{be}amforming design for multi-IRS \emph{r}adar (UBeR) algorithm is based on the cyclic application of PMLI and provides the optimized CRLB. 
In summary,  the contributions of our work are  introducing the signal model  for a multi-IRS-aided radar system, derivation of the Fisher information for the DoA estimation and developing our  algorithm called  UBeR for  joint Unimodular waveform and beamforming design in multi-IRS-aided radar.

Throughout this paper, we use bold lowercase and bold uppercase letters for vectors and matrices, respectively.  We represent a vector $\mathbf{x}\in\mathbb{C}^{N}$ in terms of its elements $\{x_{i}\}$ as $\mathbf{x}=[x_{i}]^{N}_{i=1}$. The $mn$-th element of the matrix $\mbB$ is $\left[\mbB\right]_{mn}$. The sets of complex and real numbers are $\mathbb{C}$ and $\mathbb{R}$, respectively;  $(\cdot)^{\top}$, $(\cdot)^{\ast}$and $(\cdot)^{\mathrm{H}}$ are the vector/matrix transpose, conjugate and the Hermitian transpose, respectively; trace of a matrix is  $\operatorname{Tr}(.)$; the function $\textrm{diag}(.)$ returns the diagonal elements of the input matrix; and $\textrm{Diag}(.)$  
produces a diagonal/block-diagonal matrix with the same diagonal entries/blocks as its vector/matrices argument. 
 The Hadamard (element-wise) and Kronecker products are $\odot$ and $\otimes$, respectively. The vectorized form of a matrix $\mbB$ is written as $\vec{\mbB}$. The $s$-dimensional all-ones vector, all-zeros vector, and the identity matrix of  size $s\times s$ are $\mathbf{1}_{s}$, $\mathbf{0}_{N}$, and $\mbI_s$, respectively. The  minimum eigenvalue of $\mbB$ is  denoted by $\lambda_{min}(\mbB)$. The real, imaginary, and angle/phase components of a complex number are $\Re{\cdot}$, $\Im{\cdot}$, and $\arg{\cdot}$, respectively.  $\mathrm{vec}_{_{K,L}}^{-1}\left(\mbc\right)$ reshapes the input vector $\mbc\in\mathbb{C}^{KL}$ into a  matrix $\mbC\in\mathbb{C}^{K \times L}$ such that $\vec{\mbC}=c$. 

\vspace{-6pt}
\section{Multi-IRS-Aided Radar System  Model}\label{sec_2}
Consider a 
colocated MIMO radar with $N_t$ transmit and $N_r$ receive antennas, each arranged as uniform  arrays (ULA) with inter-element spacing $d$. The $M$ IRS platforms indexed as IRS$_{1}$, IRS$_{2}$,...,IRS$_{M}$, are implemented at stationary and known locations
, each  equipped with $N_m$ reflecting elements arranged as ULA, with element spacing of $d_m$ between  the antennas/reflecting elements of  IRS$_m$.   
The continuous-time signal transmitted from the $n$-th antenna at time instant $t$ is $x_n(t)$. Denote the $N_t \times 1$ vector of all  transmit signals as $\bx(t)=[x_{_{i}}(t)]_{i=1}^{N_t} \in \Omega^{N_t}$, 
where the set of unimodular sequences is  $\Omega^{n} = \left\{\mbs \in \mathbb{C}^{n}| \mbs= [e^{\textrm{j}\omega_{i}}]^{n}_{i=1}, \omega_i \in [0,2\pi]\right\}$.
The steering vectors of radar transmitter, receiver and the $m$-th IRS are, respectively, 
$\mba_t(\theta)=[1,e^{\textrm{j}\frac{2\pi}{\lambda}d sin \theta},\ldots,e^{\textrm{j}\frac{2\pi}{\lambda}d(N_t-1) sin \theta}]^{\top}$, 
$\mba_r(\btheta)=[1,e^{\textrm{j}\frac{2\pi}{\lambda}d sin \theta},\ldots,e^{\textrm{j}\frac{2\pi}{\lambda}d(N_r-1) sin \theta}]^{\top}$, and 
$\mbb_m(\btheta)= [1,e^{\textrm{j}\frac{2\pi}{\lambda}d_m sin \theta},\ldots,e^{\textrm{j}\frac{2\pi}{\lambda}d_m(N_m-1) sin \theta}]^{\top}$,
where $\lambda$, is the carrier wavelength and   $d$ and $d_m$ are usually assumed to be half the  carrier wavelength. Each reflecting element of IRS$_m$ reflects the incident signal with a phase shift and amplitude change that is configured via a smart controller~\cite{bjornson2022reconfigurable}.   We denote the phase shift vector  of IRS$_m$ by $\mbv_m=[e^{\textrm{j}\phi_{_{m,1}}},\ldots,e^{\textrm{j}\phi_{_{m,N_m}}}]^\T \in \mathbb{C}^{N_m}$, 
where $\phi_{_{m,k}}\in[0,2\pi]$ is the phase shift  associated with the $k$-th passive element of  IRS$_m$. 

Denote the angle between the radar-target, radar--IRS$_m$, and  target-IRS$_m$ by $\theta_{tr}$, $\theta_{ri,m}$, and $\theta_{ti,m}$, respectively. 
Denote target-radar channel by
$\mbH_{tr}=\mba_r(\theta_{tr}) \in \mathbb{C}^{N_r \times 1}$; and radar-target by 
$\mbH_{rt}=\mba_t(\theta_{tr})^{\top}  \in \mathbb{C}^{1\times N_t}$. The LoS or radar-target-radar channel matrix is $\mbH_{rtr}=\mba_r{(\theta_{tr})}\mba_t{(\theta_{tr})}^{\top} \in \mathbb{C}^{N_r \times N_T}$. Analogously,  for the multi-IRS aided radar the NLoS channel matrices associated with IRS$_m$ are defined as $\mbH_{ri,m}=\mbb_m(\theta_{ri,m}) \mba_{t}^{\top}(\theta_{ri,m})\in \mathbb{C}^{N_m \times N_T}$ for radar-IRS$_{m}$; 
$\mbH_{it,m}=\mbb_{m}^{\top}(\theta_{ti,m}) \in \mathbb{C}^{1 \times N_m}$ for IRS$_{m}$-target; 
$\mbH_{ti,m}=\mbb_m(\theta_{ti,m})\in \mathbb{C}^{N_m\times 1}$ for target-IRS$_{m}$; and
$\mbH_{ir,m}=\mba_r(\theta_{ri,m}) \mbb^{\top}_m(\theta_{ri,m}) \in \mathbb{C}^{N_r\times N_m}$ for IRS$_{m}$-radar paths. 

The received signal 
back-scattered from a single target is the superimposition of echoes from both LoS and NLoS paths as \par \noindent\small
 \begin{align}
	    \label{eq::MIMO}
			\mby(t)&=\alpha_{_{rtr}}\mbH_{rtr}\mbx(t-\tau_{rtr})\nonumber\\
			&+\sum_{m=1}^{M}\alpha_{_{ritr,m}}\mbH_{tr}\mbH_{it,m} \bPhi_m \mbH_{ri,m} \mbx(t-\tau_{ritr,m}) \nonumber\\
			&+\sum_{m=1}^{M}\alpha_{_{rtir,m}}  \mbH_{ir,m}\bPhi_m \mbH_{ti,m}\mbH_{rt}\mbx(t-\tau_{rtir,m}) \nonumber\\
			&+ \sum_{m=1}^{M}\alpha_{_{ritir,m}}\mbH_{ir,m}\bPhi_{m}\mbH_{ti,m} \mbH_{it,m}\bPhi_m\mbH_{ri,m}\nonumber\\
			&\mbx(t-\tau_{ritir,m})+\mbw(t),\;\in \mathbb{C}^{N_r},
\end{align}\normalsize
where $\bPhi_{m}=\Diag{\mbv_m}$, $\alpha_{_{{(\cdot),m}}}$ is the complex reflectivity which depends on the  target  back-scattering coefficient and  the atmospheric attenuation, and $\mbw(t)\sim \mathcal{CN}(\bzero,\sigma^2\mbI_{N_t})$ denotes a stationary (homoscedastic) additive  white Gaussian noise (AWGN). 
In general, the received signal may also have an additional inter-IRS interference that should be included while accounting for the SNR. When there is some blockage or obstruction between the radar and target, we have $\alpha_{rtr}\simeq 0$, 
$\alpha_{ritr,m}\simeq 0$ and  $\alpha_{rtir,m}\simeq 0$. We  replace  $\alpha_{_{ritir,m}}$ by $\alpha_{_{m}}$ for  notation brevity. 
The received signal 
becomes\par \noindent 
\begin{align}
 \mby(t)&= \sum_{m=1}^{M}\alpha_{_{m}}\mbH_{ir,m}\bPhi_m\mbH_{ti,m} \mbH_{it,m}\bPhi_m\mbH_{ri,m}\nonumber \\
& \hspace{2cm} \bx(t-\tau_{ritir,m})+\mbw(t).
\end{align}\normalsize
Our goal is to design a radar system for inspecting a  range cell  located at distance $d_{tr}$ with respect to (w.r.t.) the radar transmitter/receiver for a potential target. Assume that the relative time gaps  between any two multipath signals are very small in comparison to the actual roundtrip delays, i.e., $\tau_{ritir,m} \approx \tau_0 =\frac{2d_{tr}}{c}$ for $m \in \{1,\ldots,M\}$, where $c$ is the speed of light. We collect  $N$  slow-time samples at the rate $1/T_s$  from the signal, at $t=nT_s$, $n=0,\ldots,N-1$. Hence, corresponding to  the  range-cell of interest, the received signal vector is \par \noindent\small 
\begin{equation}
\label{eq_1}
\mby[n]= \sum_{m=1}^{M} \alpha_{_{m}} \mbH_{m} \mbx[n]+\mbw[n],\quad \mby[n] \in \mathbb{C}^{N_r \times 1} , 
\end{equation}\normalsize
where 
$\mbx[n]=\mbx(\tau_0+nT_s)\in \mathbb{C}^{N_t \times 1}$, $\mby[n]=[y_{_{i}}[n]]_{i=1}^{N_r}$, and we define $\mbH_m=\mbH_{ir,m}\bPhi_{m}\mbH_{ti,m} \mbH_{it,m}\bPhi_m\mbH_{ri,m}\in\mathbb{C}^{N_r\times N_t}$. The delay $\tau_0$ is aligned on-the-grid so that $n_0=\tau_0/T_s$ is an integer~\cite{mishra2017sub}.

Collecting all discrete-time samples for $N_r$  receiver antennas, the received signal is the $N_r \times N$  
matrix
 $\mbY=[\mby[0],\ldots,\mby[N-1]]=\sum_{m=1}^{M} \alpha_{_{m}}  \mbH_m \mbX+\mbW,$
where \small$\mbX=[\mbx[0],\ldots,\mbx[N-1]]\in \mathbb{C}^{N_t \times N}$\normalsize 
and \small$\mbW$ $=[\mbw[0],\ldots,$ $\mbw[N-1]]\in \mathbb{C}^{N_r \times N}$\normalsize. 
Vectorizing 
as $\mby=\vec{\mbY}$ yields\par\noindent
\begin{equation}\label{eq:vec_y}
\mby= \sum_{m=1}^{M} \alpha_{_{m}}  \vec{\mbH_m \mbX} +\vec{\mbW} =\tilde{\mbX}\tilde{\mbH}\balpha +\tilde{\mbw} ,
\end{equation}\normalsize
where $\tilde{\mbX}=\mbX^{\T}\otimes \mbI_{_{N_r}}$, $\tilde{\mbH}=[\tilde{\mbH}_{_1},\ldots,\tilde{\mbH}_{_M}]$, $\tilde{\mbH}_m=\vec{\mbH_m} $ for $m \in \{1,\ldots,M\}$, $\tilde{\mbw}=\vec{\mbW}$ and $\balpha=[\alpha_m]_{m=1}^M$. Given that $\mbw(t)$ is AWGN in~\eqref{eq::MIMO}, it is  easily observed  that $\mby \sim \mathcal{CN}(\bmu,\mbR)$, where $\bmu=\tilde{\mbX}\tilde{\mbH}\balpha$ and $\mbR=\sigma^2\mbI_{_{N_rN}}$. Note that, since   $\mbw(n)$ is a stationary process and i.i.d. with $\sigma^{2}$ variance, through vectorization and stacking all ensembles as one vector, the resulting process is still stationary and i.i.d with the same variance.


Our goal is to show the effectiveness of placing $M$ IRS platforms in  estimating the DoA of the target in the LoS path,  i.e.  $\theta_{tr}$. For simplicity, we  consider a two-dimensional (2-D) scenario, where the radar, IRS platforms and the target are in the same plane. Our analysis can be easily extended to 3-D scenarios. The following remark states that  the estimation of  DoAs in the NLoS paths, $\theta_{ti,m}$, for  $m \in \{1,\ldots,M\}$ is  equivalent to an estimation of $\theta_{tr}$. 
\begin{remark}\label{rmk:one}
Estimation of the  vector of target DoAs, 
$\bzeta=[\theta_{ti,1},$ $\ldots,\theta_{ti,M}]^\T$ 
is equivalent to estimating scalar DoA  parameter, $\theta_{tr}$. 
This follows because, given the  locations of radar, IRS platforms and  potential target range in the 2-D plane, we have 
$\bzeta=[\theta_{tr}+\theta_{1},\ldots,\theta_{tr}+\theta_{_{M}}]^\T$,
 where  $\theta_{m}$ for $m \in \{1,\ldots,M\}$ are known. 
 \end{remark}
\section{UQP-Based CRLB Optimization}\label{sec_3}
For an unbiased estimator of a parameter $\theta_{tr}$ ($\theta$, hereafter), the variance  of  $\hat{\theta}$ is lower bounded as $\mbE \{{(\hat{\theta}-\theta)(\hat{\theta}-\theta)^H}\}\geq \textrm{CRLB}(\theta)$~\cite{kay1993fundamentals}. Theorem~\ref{theorem1} below unveils the Fisher information $F_{\theta}=\left(\textrm{CRLB}(\theta)\right)^{-1}$. 
\begin{theorem} 
\label{theorem1}
Consider for the multi-IRS-aided radar, the receive signal model in~\eqref{eq:vec_y}. The Fisher information of LoS DoA 
$\theta$ is  
\par\noindent\small
\begin{equation}\label{eq:fisher}
F_{\theta}=\frac{2}{\sigma^2}\Re{\balpha^{\H}\dot{\tilde{\mbH}}^{\H}\tilde{\mbX}^{\H}\tilde{\mbX}\dot{\tilde{\mbH}}\balpha},
\end{equation}\normalsize
where $\dot{\tilde{\mbH}}=\left[\dot{\tilde{\mbH}}_{_1},\ldots,\dot{\tilde{\mbH}}_{_M}\right]$, $\dot{\tilde{\mbH}}_{m}=\vec{\dot{\mbH}_{_m}}$ and 
$\dot{\mbH}_m=\frac{\partial\mbH_{_m}}{\partial\theta}=b_m\mbH_{ir,m}\bPhi_{m}(\mbb_m(\theta_{ti,m})(\mbd \odot\mbb_m(\theta_{ti,m}))^{\T}$ 
$+(\mbd \odot \mbb_m(\theta_{ti,m}))\mbb_m(\theta_{ti,m})^{\T})\bPhi_m\mbH_{ri,m}$, 
with NLoS DoAs 
being 
$\theta_{ti,m}=\theta+\theta_{m}$, $b_m=\j \frac{2\pi d_m}{\lambda}\cos(\theta_m)$ and $\mbd=[0,\ldots,N_m-1]^{\T}$.\vspace{-8pt}
\end{theorem}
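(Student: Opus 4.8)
The plan is to treat this as a standard circularly-symmetric complex-Gaussian Fisher-information computation via the Slepian--Bangs formula, and then to reduce the required mean-gradient to a single matrix-calculus differentiation of the channel. First I would record the distributional facts already established below~\eqref{eq:vec_y}: the observation obeys $\mby\sim\mathcal{CN}(\bmu,\mbR)$ with mean $\bmu=\tilde{\mbX}\tilde{\mbH}\balpha$ and covariance $\mbR=\sigma^{2}\mbI_{N_rN}$. The crucial structural observation is that the parameter $\theta$ enters \emph{only} through $\tilde{\mbH}$ (via $\theta_{ti,m}=\theta+\theta_m$), whereas $\tilde{\mbX}$, $\balpha$, and the noise covariance $\mbR$ are all independent of $\theta$.

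For a single real parameter the Slepian--Bangs formula reads
\begin{equation}
F_\theta=2\Re{\left(\frac{\partial\bmu}{\partial\theta}\right)^{\H}\mbR^{-1}\frac{\partial\bmu}{\partial\theta}}+\Tr{\mbR^{-1}\frac{\partial\mbR}{\partial\theta}\mbR^{-1}\frac{\partial\mbR}{\partial\theta}}.
\end{equation}
Since $\mbR=\sigma^{2}\mbI_{N_rN}$ is constant in $\theta$, the trace term vanishes identically and $\mbR^{-1}=\sigma^{-2}\mbI$, leaving $F_\theta=\tfrac{2}{\sigma^2}\Re{(\partial\bmu/\partial\theta)^{\H}(\partial\bmu/\partial\theta)}$. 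Differentiating $\bmu=\tilde{\mbX}\tilde{\mbH}\balpha$ and using that $\tilde{\mbX}$ and $\balpha$ carry no $\theta$-dependence gives $\partial\bmu/\partial\theta=\tilde{\mbX}\dot{\tilde{\mbH}}\balpha$, which upon substitution reproduces exactly~\eqref{eq:fisher}. Because $\tilde{\mbH}_m=\vec{\mbH_m}$ and vectorization is linear, differentiation commutes with $\vec{\cdot}$, so $\dot{\tilde{\mbH}}_m=\vec{\dot{\mbH}_m}$ and the whole problem collapses to evaluating the single block-derivative $\dot{\mbH}_m=\partial\mbH_m/\partial\theta$.

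This matrix derivative is where I expect the genuine work to lie, and it is the main obstacle. Writing $\mbH_m=\mbH_{ir,m}\bPhi_m(\mbH_{ti,m}\mbH_{it,m})\bPhi_m\mbH_{ri,m}$, I would first isolate which factors depend on $\theta$: the radar--IRS matrices $\mbH_{ir,m}$ and $\mbH_{ri,m}$ depend only on the known angle $\theta_{ri,m}$, and the phase matrices $\bPhi_m$ are design variables, so the \emph{sole} $\theta$-dependent factor is the outer product $\mbH_{ti,m}\mbH_{it,m}=\mbb_m(\theta_{ti,m})\mbb_m(\theta_{ti,m})^{\T}$. The product rule then gives $\partial_\theta(\mbb_m\mbb_m^{\T})=\dot{\mbb}_m\mbb_m^{\T}+\mbb_m\dot{\mbb}_m^{\T}$, which is precisely the source of the symmetric two-term bracket in the statement. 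Finally, differentiating the $k$-th entry $e^{\j\frac{2\pi}{\lambda}d_m k\sin\theta_{ti,m}}$ of $\mbb_m$ by the chain rule (with $\partial\theta_{ti,m}/\partial\theta=1$) yields the scalar prefactor $b_m$ multiplying the index-weighted copy $\mbd\odot\mbb_m$, i.e. $\dot{\mbb}_m=b_m(\mbd\odot\mbb_m)$ with $\mbd=[0,\ldots,N_m-1]^{\T}$. Substituting this back into the product rule and factoring out $b_m$, $\mbH_{ir,m}\bPhi_m$ on the left and $\bPhi_m\mbH_{ri,m}$ on the right gives the stated closed form for $\dot{\mbH}_m$, completing the proof. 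The only real care required is bookkeeping: correctly singling out the $\theta$-dependent middle factor and tracking the two transposed terms from the product rule, as everything else is routine linear algebra.
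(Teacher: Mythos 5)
Your proposal is correct and takes essentially the same route as the paper's proof: the Slepian--Bangs formula with $\mbR=\sigma^{2}\mbI$ independent of $\theta$, so the trace term vanishes and $F_{\theta}=\tfrac{2}{\sigma^{2}}\Re{(\tilde{\mbX}\dot{\tilde{\mbH}}\balpha)^{\H}(\tilde{\mbX}\dot{\tilde{\mbH}}\balpha)}$ follows from $\partial\bmu/\partial\theta=\tilde{\mbX}\dot{\tilde{\mbH}}\balpha$. You in fact go slightly further than the paper, which merely asserts the form of $\dot{\mbH}_m$ in the statement without deriving it: your product/chain-rule computation of $\partial_\theta\left(\mbb_m\mbb_m^{\T}\right)$ is the right bookkeeping, and it also reveals that the chain rule produces the prefactor $\j\tfrac{2\pi d_m}{\lambda}\cos(\theta_{ti,m})$, so the statement's definition of $b_m$ with $\cos(\theta_m)$ appears to be a typo that your argument implicitly corrects.
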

\begin{proof}
 Given the observations $\mby\sim \mathcal{CN}(\bmu(\theta),\mbR)$, using Slepian-Bangs formula \cite[Chapter 3C]{kay1993fundamentals}, the Fisher information is\par\noindent\small
\begin{align}\label{eq:Slepian}
F_{\theta}=\Tr {\mbR^{-1}\frac{\partial \mbR }{\partial\theta} \mbR^{-1}\frac{\partial \mbR }{\partial\theta}}+
2\Re{\frac{\partial\bmu(\theta)}{\partial\theta}^H\mbR^{-1}\frac{\partial\bmu(\theta)}{\partial\theta}}.
\end{align}\normalsize
From \eqref{eq:vec_y}, $\bmu(\theta) = \tilde{\mbX}\tilde{\mbH}\balpha$. Also, given the  above mentioned definitions, we have $\frac{\partial \mu(\theta)}{\partial \theta}=\tilde{\mbX}\dot{\tilde{\mbH}}\balpha$. Substituting this in~\eqref{eq:Slepian} and using $\mbR=\sigma^2\mbI$, one arrives at \eqref{eq:fisher}.
\end{proof}
\begin{remark}
In the  absence of the IRS, 
\textit{ceteris paribus}, the LoS Fisher information 
is 
$F_{\theta}=\frac{2|\alpha_{rtr}|^2 }{\sigma^2}\|\tilde{\mbX}\dot{\tilde{\mbH}}_{rtr}\|_2^2$, 
where 
$\dot{\tilde{\mbH}}_{rtr}=\vec{\dot{\mbH}_{_{rtr}}}$, 
$\dot{\mbH}_{rtr}=\frac{\partial\mbH_{_{rtr}}}{\partial\theta}=\j \frac{2\pi d}{\lambda}\cos(\theta)\left( \left(\mbd' \odot\mba_r{(\theta)}\right)\mba_t{(\theta)}^{\top} \right.$ 
$\left.+\mba_r{(\theta)}(\mbd' \odot \mba_t{(\theta)}^{\top} )\right)$,
$N_r=N_t$, and $\mbd'=[0,\ldots,N_r-1]^{\T}$. \vspace{-8pt}
\end{remark}
To design the unimodular waveform $\mbX$, the following proposition casts 
$F_{\theta}$ in standard quadratic form.
\begin{proposition}\label{Neg_prop1} 
 The Fisher information $F_{\theta}$ of LoS DoA is \par\noindent\small
\begin{align}
\label{eq:neg_crlb1}
F_{\theta}\left(\mbX\right)=\vec{\mbX}^{\H}(\mbI_N \otimes \mbB)^{\H}(\mbI_N \otimes \mbB)\vec{\mbX},
 \end{align}\normalsize
 where $\mbB=\frac{\sqrt{2}}{\sigma}\mathrm{vec}_{_{_{N_r,N_t}}}^{-1}\left(\dot{\tilde{\mbH}}\alpha\right)\in \mathbb{C}^{N_r\times N_t}$.\vspace{-8pt}
\end{proposition}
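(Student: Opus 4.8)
The plan is to reduce the Fisher information from Theorem~\ref{theorem1} to a single Frobenius norm and then re-expand that norm as the claimed Kronecker-structured quadratic form. First I would observe that the scalar inside the real part in~\eqref{eq:fisher} is already real and nonnegative: writing $\bdf=\dot{\tilde{\mbH}}\balpha$, we have $\balpha^{\H}\dot{\tilde{\mbH}}^{\H}\tilde{\mbX}^{\H}\tilde{\mbX}\dot{\tilde{\mbH}}\balpha=\|\tilde{\mbX}\bdf\|_2^2\ge 0$, so the $\operatorname{Re}(\cdot)$ is redundant and $F_\theta=\tfrac{2}{\sigma^2}\|\tilde{\mbX}\bdf\|_2^2$.

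The central step is to invoke the standard Kronecker--vec identity $(\mbA^{\T}\otimes\mbC)\vec{\mbD}=\vec{\mbC\mbD\mbA}$. Using $\tilde{\mbX}=\mbX^{\T}\otimes\mbI_{N_r}$ together with the definition $\mbB=\tfrac{\sqrt 2}{\sigma}\vecinv{\bdf}$ --- equivalently $\bdf=\tfrac{\sigma}{\sqrt 2}\vec{\mbB}$ --- I would compute
\[
\tilde{\mbX}\bdf=\tfrac{\sigma}{\sqrt 2}\,(\mbX^{\T}\otimes\mbI_{N_r})\vec{\mbB}=\tfrac{\sigma}{\sqrt 2}\,\vec{\mbB\mbX}.
\]
Substituting back gives $F_\theta=\tfrac{2}{\sigma^2}\cdot\tfrac{\sigma^2}{2}\|\vec{\mbB\mbX}\|_2^2=\|\mbB\mbX\|_F^2$, so the Fisher information is exactly the squared Frobenius norm of $\mbB\mbX$.

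Finally I would re-introduce a Kronecker factor on the opposite side. Applying the same identity with $\mbA=\mbI_N$ and $\mbC=\mbB$ yields $(\mbI_N\otimes\mbB)\vec{\mbX}=\vec{\mbB\mbX}$, whence
\[
\vec{\mbX}^{\H}(\mbI_N\otimes\mbB)^{\H}(\mbI_N\otimes\mbB)\vec{\mbX}=\|(\mbI_N\otimes\mbB)\vec{\mbX}\|_2^2=\|\mbB\mbX\|_F^2,
\]
which matches $F_\theta$ and establishes~\eqref{eq:neg_crlb1}. The computation is essentially bookkeeping; the only point requiring care --- and the step I would double-check --- is the consistent placement of transposes and identity blocks in the two applications of the Kronecker--vec identity (once to absorb $\mbX$ into $\mbB\mbX$, and once to reconstitute $\mbI_N\otimes\mbB$), since the reshaping dimensions $N_r\times N_t$ of $\mbB$ and the slow-time length $N$ must be tracked correctly so that the matrix products are conformable.
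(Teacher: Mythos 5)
Your proof is correct and takes essentially the same route as the paper's: both drop the redundant $\operatorname{Re}\{\cdot\}$ after noting its argument is a nonnegative real number, and both apply the Kronecker--vec identity to rewrite $\tilde{\mbX}\dot{\tilde{\mbH}}\balpha$ as $(\mbI_N\otimes\vecinv{\dot{\tilde{\mbH}}\balpha})\vec{\mbX}$, up to the $\sqrt{2}/\sigma$ scaling absorbed into $\mbB$. Your passage through the intermediate Frobenius norm $\|\mbB\mbX\|_F^2$ merely splits the paper's single-step identity into two applications of the same identity, which is a stylistic rather than substantive difference.
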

\begin{proof}\label{Neg_prof3}  
Given  $\tilde{\mbX}=\mbX^{\T}\otimes \mbI_{_{N_r}}$, rewrite Fisher information in (\ref{eq:fisher}) as\par\noindent\small
\begin{equation}\label{Negr_A1}
 F_{\theta}=\frac{2}{\sigma^{2}}\operatorname{Re}\left\{\left(\left(\mbX^{\top}\otimes \mbI_{N_r}\right)\dot{\tilde{\mbH}}\alpha\right)^{\mathrm{H}}\left(\left(\mbX^{\top}\otimes \mbI_{N_r}\right)\dot{\tilde{\mbH}}\alpha\right)\right\}.
\end{equation}\normalsize
Since the argument of real operator is a real number, we can put it out of the real operator. Using the identity \small$\left(\mbX^{\top}\otimes \mbI_{N_r}\right)\dot{\tilde{\mbH}}\alpha=\left( \mbI_{N}\otimes \operatorname{vec}_{_{N_r,N_t}}^{-1}\left(\dot{\tilde{\mbH}}\alpha\right)\right)\vec{\mbX}$ \normalsize in~\eqref{Negr_A1}, we immediately get \eqref{eq:neg_crlb1}.
\end{proof}
Using the expression in~\eqref{eq:neg_crlb1},  
we recast the unimodular waveform design objective as a unimodular quadratic objective that leads to a UQP. To proceed with IRS beamformer design, define, \small
$\dot{\tilde{\mbH}}_{_m}= \mbD_m \vec{\mbV_m}$,
$\mbD=\Diag{\mbD_1,\ldots,\mbD_m}$, \par\noindent\small
\begin{equation}\label{eq:V_m}
\mbD_m=\left(\mbC_m^{\T}\diag{\mbd}\otimes \mbC_m^{\T}\right)+\left(\mbC_m^{\T}\otimes\mbC_m^{\T}\diag{\mbd}\right)
\end{equation}\normalsize
and
$\mbC_m=\Diag{\mbb_m(\theta_{ti,m})}\mbH_{ri,m}$,
where the unimodular phase shifts for IRS$_m$ are given by $\mbv_m=\diag{\bPhi_m}$ or $\mbV_m=\vec{\mbv_m\mbv_m^{\top}}$. In order to obtain~\eqref{eq:V_m}, we imposed the reciprocity, $\mbH_{ir,m}=\mbH_{ri,m}^{\T}$ for a radar with collocated antennas and $N_r=N_t$. 
For the IRS beamforming, the Fisher information $F_{\theta}$ w.r.t. phase shifts is recast in the following proposition.
\begin{proposition}\label{Neg_prop2}
 The Fisher information 
 is quartic in phase shifts: 
\begin{equation}\label{eq_neg10}
F_{\theta}(\upnu)=\upnu^{\H}\mbQ^{\H}_{1}(\upnu)\mbT \mbQ_{1}(\upnu) \upnu=\upnu^{\H}\mbQ_{2}^{\H}(\upnu)\mbT \mbQ_{2}(\upnu) \upnu,
\end{equation}
where 
 $\upnu = \left[\mbv^{\top}_{1},\mbv^{\top}_{2}, \cdots ,\mbv^{\top}_{M}\right]^{\top}\in \mathbb{C}^{MN_{m}}$, 
 $\mbT =\mbD^{H}\mbP^{\H}\mbZ^{\ast}\mbP\mbD$, 
 $\mbQ_{1}(\upnu) =\Diag{[\mbv_1\otimes \mbI_{N_m},\ldots,\mbv_{_{M}}\otimes \mbI_{N_m}]}$, 
 $\mbQ_{2}(\upnu) = \Diag{[ \right.$
 $\left. \mbI_{N_m}\otimes\mbv_1,\ldots,\mbI_{N_m}\otimes\mbv_{_M}]}$, 
 $\mbZ=(\mbI_{N_rN_t} \otimes  \balpha^{*}\balpha^{\T})^{\T}(\tilde{\mbX}^{\T}\tilde{\mbX}^{*}\otimes  \mbI_{M})$, 
and $\mbP$ is the commutation matrix, i.e., $\vec{\dot{\tilde{\mbH}}^{\T}}=\mbP\vec{\dot{\tilde{\mbH}}}$.\vspace{-8pt}
\end{proposition}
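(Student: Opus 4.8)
The plan is to turn the compact Hermitian form of $F_{\theta}$ from Theorem~\ref{theorem1} into an ordinary quadratic form in $\vec{\dot{\tilde{\mbH}}}$, then feed in the block-bilinear parametrization $\dot{\tilde{\mbH}}_{m}=\mbD_{m}\vec{\mbv_m\mbv_m^{\top}}$ established just before the statement, and finally read off the quartic dependence on $\upnu$. First I would observe that the argument of $\Re{\cdot}$ in \eqref{eq:fisher} equals $\|\tilde{\mbX}\dot{\tilde{\mbH}}\balpha\|_{2}^{2}\ge 0$, so it is already real and the real-part operator is redundant: $F_{\theta}=\tfrac{2}{\sigma^{2}}\balpha^{\H}\dot{\tilde{\mbH}}^{\H}\tilde{\mbX}^{\H}\tilde{\mbX}\dot{\tilde{\mbH}}\balpha$. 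Writing this scalar as a trace and cycling gives $F_{\theta}=\tfrac{2}{\sigma^{2}}\Tr{\dot{\tilde{\mbH}}^{\H}(\tilde{\mbX}^{\H}\tilde{\mbX})\dot{\tilde{\mbH}}\,\balpha\balpha^{\H}}$, and the identities $\Tr{\bA^{\H}\bB}=\vec{\bA}^{\H}\vec{\bB}$ together with $\vec{\bA\bX\bB}=(\bB^{\top}\otimes\bA)\vec{\bX}$ yield $F_{\theta}=\tfrac{2}{\sigma^{2}}\vec{\dot{\tilde{\mbH}}}^{\H}\bigl(\balpha^{*}\balpha^{\top}\otimes\tilde{\mbX}^{\H}\tilde{\mbX}\bigr)\vec{\dot{\tilde{\mbH}}}$.

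Next I would reconcile this Kronecker ordering with the one baked into $\mbZ$. Using the mixed-product rule and $(\mbI_{N_rN_t}\otimes\balpha^{*}\balpha^{\top})^{\top}=\mbI_{N_rN_t}\otimes\balpha\balpha^{\H}$, the stated $\mbZ$ collapses to $\mbZ=\tilde{\mbX}^{\top}\tilde{\mbX}^{*}\otimes\balpha\balpha^{\H}$, so that $\mbZ^{*}=\tilde{\mbX}^{\H}\tilde{\mbX}\otimes\balpha^{*}\balpha^{\top}$, which has the two Kronecker factors swapped relative to the form obtained above. The commutation matrix $\mbP$ defined by $\vec{\dot{\tilde{\mbH}}^{\top}}=\mbP\vec{\dot{\tilde{\mbH}}}$ exactly repairs this: for blocks of the relevant sizes one has $\mbP^{\H}(\bA\otimes\bB)\mbP=\bB\otimes\bA$, hence $\mbP^{\H}\mbZ^{*}\mbP=\balpha^{*}\balpha^{\top}\otimes\tilde{\mbX}^{\H}\tilde{\mbX}$ and $F_{\theta}=\tfrac{2}{\sigma^{2}}\vec{\dot{\tilde{\mbH}}}^{\H}\mbP^{\H}\mbZ^{*}\mbP\vec{\dot{\tilde{\mbH}}}$ (the scalar $2/\sigma^{2}$ and the per-IRS factors $b_{m}$ being folded into the definition of $\mbD$). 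Since $\dot{\tilde{\mbH}}$ has columns $\dot{\tilde{\mbH}}_{m}=\mbD_{m}\vec{\mbv_m\mbv_m^{\top}}$ and $\mbD=\Diag{\mbD_1,\ldots,\mbD_M}$, stacking those columns gives $\vec{\dot{\tilde{\mbH}}}=\mbD\,\mbu$ with $\mbu=\bigl[(\mbv_1\otimes\mbv_1)^{\top},\ldots,(\mbv_M\otimes\mbv_M)^{\top}\bigr]^{\top}$, whereupon $F_{\theta}=\mbu^{\H}\mbT\mbu$ with $\mbT=\mbD^{\H}\mbP^{\H}\mbZ^{*}\mbP\mbD$.

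Finally I would expose the bi-quadratic structure via the Kronecker-square identity $\mbv_m\otimes\mbv_m=(\mbv_m\otimes\mbI_{N_m})\mbv_m=(\mbI_{N_m}\otimes\mbv_m)\mbv_m$, which holds because stacking $v_{k}\mbv_m$ over $k$ reproduces $\mbv_m\otimes\mbv_m$ either way. Applying this blockwise shows $\mbu=\mbQ_{1}(\upnu)\upnu=\mbQ_{2}(\upnu)\upnu$, where $\mbQ_{1}$ and $\mbQ_{2}$ are precisely the block-diagonal matrices in the statement; substituting either factorization into $F_{\theta}=\mbu^{\H}\mbT\mbu$ gives the two advertised quartic forms $\upnu^{\H}\mbQ_{1}^{\H}(\upnu)\mbT\mbQ_{1}(\upnu)\upnu$ and $\upnu^{\H}\mbQ_{2}^{\H}(\upnu)\mbT\mbQ_{2}(\upnu)\upnu$. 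I expect the main obstacle to be the middle step: keeping conjugations and transposes consistent while reordering the Kronecker factors, and confirming that it is the commutation matrix $\mbP$ (not a bare transpose) that converts $\balpha^{*}\balpha^{\top}\otimes\tilde{\mbX}^{\H}\tilde{\mbX}$ into the ordering carried by $\mbZ$. Once the conjugation property $\mbP^{\H}(\bA\otimes\bB)\mbP=\bB\otimes\bA$ is pinned down for the correct dimensions, the remaining manipulations are routine vec/Kronecker bookkeeping.
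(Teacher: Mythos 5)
Your proposal is correct, and it reaches the stated $\mbT=\mbD^{\H}\mbP^{\H}\mbZ^{\ast}\mbP\mbD$ by a route that differs from the paper's in its middle section. The paper works constructively: it writes $F_{\theta}=\Tr{\dot{\tilde{\mbH}}\balpha\balpha^{\H}\dot{\tilde{\mbH}}^{\H}\tilde{\mbX}^{\H}\tilde{\mbX}}$, applies $\Tr{\bA\bB}=\vec{\bA^{\T}}^{\T}\vec{\bB}$ and the triple-product vec identity to each factor separately (which is exactly where the asymmetric-looking $\mbZ=(\mbI_{N_rN_t}\otimes\balpha^{*}\balpha^{\T})^{\T}(\tilde{\mbX}^{\T}\tilde{\mbX}^{*}\otimes\mbI_{M})$ comes from), obtaining the bilinear form $\vec{\dot{\tilde{\mbH}}^{\T}}^{\T}\mbZ\,\vec{\dot{\tilde{\mbH}}^{\H}}$; it then invokes realness of $F_{\theta}$ to conjugate this into a Hermitian form in $\vec{\dot{\tilde{\mbH}}^{\T}}$, and needs only the \emph{definition} of $\mbP$ (namely $\vec{\dot{\tilde{\mbH}}^{\T}}=\mbP\vec{\dot{\tilde{\mbH}}}$) to land on $\vec{\dot{\tilde{\mbH}}}^{\H}\mbP^{\H}\mbZ^{\ast}\mbP\vec{\dot{\tilde{\mbH}}}$. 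You instead use realness at the outset to drop $\Re{\cdot}$, derive the canonical Hermitian form $\vec{\dot{\tilde{\mbH}}}^{\H}\bigl(\balpha^{*}\balpha^{\T}\otimes\tilde{\mbX}^{\H}\tilde{\mbX}\bigr)\vec{\dot{\tilde{\mbH}}}$ in one shot, and then prove it equals $\vec{\dot{\tilde{\mbH}}}^{\H}\mbP^{\H}\mbZ^{\ast}\mbP\vec{\dot{\tilde{\mbH}}}$ by collapsing $\mbZ$ with the mixed-product rule and invoking the commutation-matrix similarity property $\mbP^{\H}(\bA\otimes\bB)\mbP=\bB\otimes\bA$ — an extra (standard, and correctly dimension-checked) lemma that the paper's construction never needs. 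What each buys: your route makes the positive-semidefinite, norm-squared structure of $F_{\theta}$ transparent and doubles as an independent verification of the statement's Kronecker ordering in $\mbZ$; the paper's route explains where $\mbZ$ and $\mbP$ come from without any property of $\mbP$ beyond its definition. The two proofs coincide from the substitution $\vec{\dot{\tilde{\mbH}}}=\mbD\mbz$ onward, including the identity $\vec{\mbv_{m}\mbv_{m}^{\T}}=(\mbv_{m}\otimes\mbI_{N_m})\mbv_{m}=(\mbI_{N_m}\otimes\mbv_{m})\mbv_{m}$ yielding the two factorizations $\mbQ_{1},\mbQ_{2}$. A final point in your favor: you explicitly fold the constants $2/\sigma^{2}$ and $b_{m}$ into $\mbD$, whereas the paper silently drops them (its stated $\dot{\tilde{\mbH}}_{m}=\mbD_{m}\vec{\mbV_m}$ omits the $b_{m}$ appearing in $\dot{\mbH}_{m}$), so your bookkeeping is, if anything, the more careful of the two.
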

\begin{proof}
 \label{Neg_prof1}
The Fisher information is 
$F_{\theta}= \Tr{\dot{\tilde{\mbH}}\balpha\balpha^{\H}\dot{\tilde{\mbH}}^{\H}\tilde{\mbX}^{\H}\tilde{\mbX}} 
= \vec{\balpha^{*}\balpha^{\T}\dot{\tilde{\mbH}}^{\T}}^{\T}
\vec{\dot{\tilde{\mbH}}^{\H}\tilde{\mbX}^{\H}\tilde{\mbX}} 
=\vec{\dot{\tilde{\mbH}}^{\T}}^{\T}\mbZ \vec{\dot{\tilde{\mbH}}^{\H}}$. 
Since $F_{\theta}$
is real, we have 
$F_{\theta}\left(\mbz\right)=\vec{\dot{\tilde{\mbH}}^{\T}}^{\H}\mbZ^{\ast} \vec{\dot{\tilde{\mbH}}^{\T}} 
=\vec{\dot{\tilde{\mbH}}}^{\H}\mbP^{\H}\mbZ^{\ast}\mbP\vec{\dot{\tilde{\mbH}}} 
=\mbz^{\H}\mbD^{H}\mbP^{\H}\mbZ^{\ast}\mbP\mbD \mbz$, 
where $\mbz=\vec{[\mbV_1,\ldots,\mbV_M]}=[\vec{\mbv_m\mbv_m^{\top}}^{\top},\ldots,\vec{\mbv_m\mbv_m^{\top}}^{\top}]^{\top}$. Applying the identity 
\begin{equation}
\label{Neg_Ar10}
\operatorname{vec}\left(\mbv_{m}\mbv^{\top}_{m}\right)=\left(\mbI_{N_m}\otimes\mbv_{m}\right)\mbv_{m}=\left(\mbv_{m}\otimes\mbI_{N_m}\right)\mbv_{m},  
\end{equation}
yields $\mbz=\mbQ_{1}(\upnu)\upnu=\mbQ_{2}(\upnu)\upnu$. This completes the proof.
\end{proof}
To jointly optimize $\mbv_m=\diag{\bPhi_m}$ 
and 
$\mbX$,  
we solve 
\par\noindent\small
\begin{align}
\label{eq:opt_crlb}
\underset{\mbX\in\Omega^{N_{t}\times N}, \upnu\in \Omega^{M N_{m}}}{\textrm{maximize}} \quad  F_{\theta}(\upnu,\mbX),
\end{align}\normalsize
which leads to the CRLB minimization. 
Note that this problem is UQP w.r.t. $\mbX$ but quartic or UQ$^{2}$P w.r.t. the phase shifts $\upnu$. 


\vspace{-8pt}
\section{UBeR Algorithm}\label{sec_4} 
\vspace{-8pt}
We resort to a task-specific \emph{alternating optimization} (AO) or \emph{cyclic algorithm} \cite{bezdek2003convergence,tang2020polyphase,soltanalian2013joint}, wherein we optimize~\eqref{eq:opt_crlb} for $\mbX$ and $\upnu$ cyclically. To tackle each subproblem, we adopt power method-like iterations (PMLI) \cite{soltanalian2014designing}, which is a computationally efficient procedure to tackle the UQP. The PMLI resembles the well-studied \emph{power method} for computing the dominant eigenvalue/vector pairs of matrices \cite{soltanalian2014designing}. Given a matrix $\mbG$, the following problem is a UQP\cite{soltanalian2014designing}:
\par\noindent\small
\begin{equation}\label{eq:opt_crlb2}
\mathcal{P}_{1}:~\underset{\mbs \in \Omega^{n}}{\textrm{maximize}} \quad\mbs^{\H} \mbG \mbs.
\end{equation}\normalsize
If $\mbG$ is positive semidefinite, the PMLI iterations\par\noindent\small
\begin{equation}\label{eq:UQP_it}
\mbs^{(t+1)}=e^{\textrm{j}\arg{\mbG\mbs^{(t)}}},
\end{equation}\normalsize
lead to a monotonically increasing objective value for the UQP.

\noindent\textit{\textbf{\underline{Unimodular Waveform Design:}}} From Proposition~\ref{Neg_prop1}, the Fisher information $F_{\theta}$ for the unimodular waveform $\mbX$ is the unimodular quadratic objective in 
(\ref{eq:neg_crlb1}). Let $\bs=\vec{\mbX}$, and $\mbG=(\mbI_N \otimes \mbB)^{\H}(\mbI_N \otimes \mbB)$.
Therefore, the vectorized $\mbX$ is obtained from $\mathcal{P}_{1}$ via the iterations ($t\geq 0$):\par\noindent\small
\begin{equation}\label{eq:UQP_it2}
\vec{\mbX^{(t+1)}}=e^{\textrm{j}\arg{(\mbI_N \otimes \mbB)^{\H}(\mbI_N \otimes \mbB)\vec{\mbX^{(t)}}}}.
\end{equation}\normalsize
If $\mbG$ is not positive semidefinte, at each iteration we use the diagonal loading technique, i.e., $\tilde{\bG} \leftarrow \mbG+\lambda_{m}\mbI$, where the loading parameter
$\lambda_{m} \geq -\lambda_{min}(\mbG)$. Note that diagonal loading with $\lambda_{m}\mbI$ has no effect on the solution of (\ref{eq:UQP_it2}) because $\mbs^{\mathrm{H}} \tilde{\mbG}(\bs) \mbs=\lambda_{m}N_{_{t}}N+\mbs^{\mathrm{H}} \mbG \mbs$.
\textit{\textbf{\underline{IRS Beamforming Design:}}} For the phase shifts optimization, we find an alternative bi-quadratic formulation to the quartic $F_{\theta}\left(\upnu\right)$. Define
\vspace{-6pt}\par\noindent\small
\begin{equation}
\label{eq_Neg_5}
g\left(\upnu_{1},\upnu_{2}\right)= \frac{1}{2}\left(\upnu^{\H}_{1}\mbG_{1}\left(\upnu_{2}\right) \upnu_{1}+\upnu^{\H}_{2}\mbG_{1}\left(\upnu_{1}\right) \upnu_{2}\right), 
\end{equation}\normalsize
where $\mbG_{1}(\upnu)=\mbQ^{\H}_{1}(\upnu)\mbT \mbQ_{1}(\upnu)$. The function $g(.,.)$ is  symmetric, i.e., $g\left(\upnu_1, \upnu_2\right)= g\left(\upnu_2, \upnu_1\right)$ and from proposition~\ref{Neg_prop2}, $F_{\theta}(\upnu)=g(\upnu,\upnu)$. According to (\ref{Neg_Ar10}), one can readily verify that $\mbQ_{1}(\upnu_{1})\upnu_{2}=\mbQ_{2}(\upnu_{2})\upnu_{1}$. As a result, $g\left(\upnu_{1},\upnu_{2}\right)$ is rewritten as
\par\noindent\small
\begin{equation}
\label{eq_Neg_6}
g\left(\upnu_{1},\upnu_{2}\right)=\mbv^{\H}_{1}\mbE\left(\upnu_{2}\right) \upnu_{1}, 
\end{equation}\normalsize
where $\mbE\left(\upnu_{2}\right)=\frac{\left(\mbG_{1}\left(\upnu_{2}\right)+\mbG_{2}\left(\upnu_{2}\right)\right)}{2}$, and $\mbG_{2}(\upnu)=\mbQ^{\H}_{2}(\upnu)\mbT \mbQ_{2}(\upnu)$.
Fixing either $\upnu_{1}$ or $\upnu_{2}$ and minimizing $g\left(\upnu_1,\upnu_2\right)$ w.r.t. the other variable requires solving the following UQP:\par\noindent\small
\vspace{-10pt}
\begin{align}\label{neg8}
\underset{\upnu_{j}\in\Omega^{MN_{m}}}
{\textrm{minimize}} \quad \upnu_{j}^{\mathrm{H}} \tilde{\mbE}(\upnu_{i}) \upnu_{j}, \quad i\neq j \in \left\{1,2\right\},
\end{align}\normalsize 
where we used the diagonal loading, $\tilde{\mbE}(\upnu_{i}) \leftarrow \lambda_{M} \mbI - \mbE(\upnu_{i})$, with  
$\lambda_{M}$ being the maximum eigenvalue of $\mbE(\upnu_{i})$. Note that diagonal loading has no effect on the solution  because $\upnu_j^{\mathrm{H}} \tilde{\mbE}(\upnu_i)\upnu_j=\lambda_{M}MN_m-\upnu_j^{\mathrm{H}} \mbE(\upnu_i) \upnu_j$. Moreover,  $\tilde{\mbE}(\upnu_{i})$ is positive semidefinite, to satisfy the requirement of PMLI.

To guarantee that the maximization of $g\left(\upnu_1, \upnu_2\right)$ w.r.t. $\upnu_{1}$ and $\upnu_{2}$ also maximizes $F_{\theta}\left(\upnu\right)$, a regularization would be helpful.
Therefore, we add the norm-$2$ error between $\upnu_{1}$ and $\upnu_{2}$ as a \emph{penalty} function to (\ref{neg8}), we obtain\par\noindent\small
\begin{equation}\label{neg11}
\underset{\upnu_{j}\in\Omega^{MN_{m}}}
{\textrm{minimize}} \quad \upnu_{j}^{\mathrm{H}} \tilde{\mbE}(\upnu_{i}) \upnu_{j} + \eta ||\upnu_{i} - \upnu_{j}||_2^2 , \quad i\neq j \in \left\{1,2\right\},
\end{equation}\normalsize
where $\eta$ is Lagrangian multiplier. Rewrite the objective of (\ref{neg11}) as  
$\bar\upnu_{j}^{\mathrm{H}}\underbrace{\begin{bsmallmatrix} \tilde{\mbE}(\upnu_{i}) &-\eta\upnu_{i} \\ -\eta \upnu_{i}^{\mathrm{H}} & 2\eta M N_{m} \end{bsmallmatrix}}_{\mathcal{E}(\upnu_{i})}\bar\upnu_{j}$, 
where $\bar\upnu_{j}=\left[\upnu^{\top}_{j} \quad 1\right]^{\top}$. Then, UQP for
(\ref{eq:opt_crlb}) w.r.t. $\upnu$ becomes\par\noindent\small
\begin{equation}\label{neg13}
   \mathcal{P}_{2}:~  \underset{\upnu_{j}\in\Omega^{MN_{m}}}
{\textrm{maximize}} \quad
\bar\upnu_{j}^{\mathrm{H}}
 \underbrace{\begin{bmatrix} \hat\lambda_{M}\mbI-\tilde{\mbE}(\upnu_{i}) &\eta\upnu_{i} \\ \eta \upnu_{i}^{\mathrm{H}} & \hat\lambda_{M}-2\eta M N_{m} \end{bmatrix}}_{=\hat{\mbE}(\upnu_{i}) =\hat\lambda_{M}\mbI-\mathcal{E}(\upnu_{i})}
 \bar\upnu_{j},   
\end{equation}\normalsize
where 
$\hat\lambda_{M}$ is the maximum eigenvalue 
of $\mathcal{E}(\upnu_{i})$. 

To tackle the UQ$^{2}$P for maximizing $F_{\theta}$, we solve the \emph{bi-quadratic} program (\ref{neg13}) using PMLI in~\eqref{eq:UQP_it}. 
Algorithm~\ref{algorithm_1} summarizes the proposed steps. The PMLI in UBeR have previously been shown to be convergent in terms of both the optimization objective and variable \cite{soltanalian2013joint,hu2016locating}. 

\begin{algorithm}[H]
\caption{\emph{U}nimodular waveform and \emph{be}amforming design for multi-IRS-aided \emph{r}adar (UBeR). }
    \label{algorithm_1}
    \begin{algorithmic}[1]
    \Statex \textbf{Input:} Initialization values  $\mbX^{(0)}$ and $\upnu_1^{(0)}$ and $\upnu_2^{(0)}$, Lagrangian multiplier $\eta$, total number of iterations $\Gamma_{1}$ and $\Gamma_{2}$ for problems $\mathcal{P}_1$ and $\mathcal{P}_2$, 
    respectively.
    \Statex \textbf{Output:} Optimized phase shifts $\upnu^*$, unimodular waveform $\mbX^*$. 
    \vspace{0.5mm}
     \State Obtain $F_{\theta}\left(\mbX\right)$ and $F_{\theta}\left(\upnu\right)$ from (\ref{eq:neg_crlb1}) and (\ref{eq_neg10}), respectively.
     \State 
     $\mbB\gets\vecinv{\dot{\tilde{\mbH}}\balpha}\in \mathbb{C}^{N_r\times N_t}$,  
     \State $\mbG\gets\left(\mbI_N \otimes \mbB\right)^{\H}\left(\mbI_N \otimes \mbB\right)$.
    
     \vspace{1.2mm}
    \For{$t_{1}=0:\Gamma_{1}-1$} $\triangleright$ Update the unimodular waveform
    \vspace{1.2mm}
    \For{$t_{2}=0:\Gamma_{2}-1$} 
    $\triangleright$ Bi-quadratic programming 
    via PMLI \small
    \vspace{1.2mm}
    \State $\upnu^{(t_{2}+1)}_{1}\gets e^{ \textrm{j} \operatorname{arg}\left(\begin{bmatrix} \mbI_{_{M N_{m}}} \mathbf{0}_{_{MN_{m}}}\end{bmatrix}\hat{\mbE}\left(\upnu^{(t_{2})}_{2},\mbX^{(t_{1})}\right)\bar\upnu^{(t_{2})}_{1}\right)},\label{neg170}$
\vspace{1.2mm}
\State 
$\upnu^{(t_{2}+1)}_{2}\gets e^{\textrm{j} \operatorname{arg}\left(\begin{bmatrix} \mbI_{_{M N_{m}}} \mathbf{0}_{_{M N_{m}}}\end{bmatrix}\hat{\mbE}\left(\upnu^{(t_{2}+1)}_{1},\mbX^{(t_{1})}\right)\bar\upnu^{(t_{2})}_{2}\right)}.\label{neg160}$
\normalsize
    \EndFor
    \vspace{1.2mm}
     \State  
      $\upnu^{(t_{1}+1)}\leftarrow$ $\upnu^{(\Gamma_{2})}_{1}$ or $\upnu^{(\Gamma_{2})}_{2}$.
      \vspace{1.2mm}
      \State 
      $\vec{\mbX^{(t_{1}+1)}}\gets e^{\textrm{j}\arg{\mbG\left(\upnu^{(t_{1}+1)}\right)\vec{\mbX^{(t_{1})}}}}$.
    \EndFor
    \vspace{1.2mm}
    \State \Return  
      $\left\{\upnu^{\star}, \mbX^{\star}\right\}\leftarrow \left\{\upnu^{(\Gamma_{1})}, \mbX^{(\Gamma_{1})}\right\}$.
    \end{algorithmic}
\end{algorithm}
\section{Simulation results}\label{sec_5}
\begin{figure}[t]
\centering
	\includegraphics[width=1.01\columnwidth]{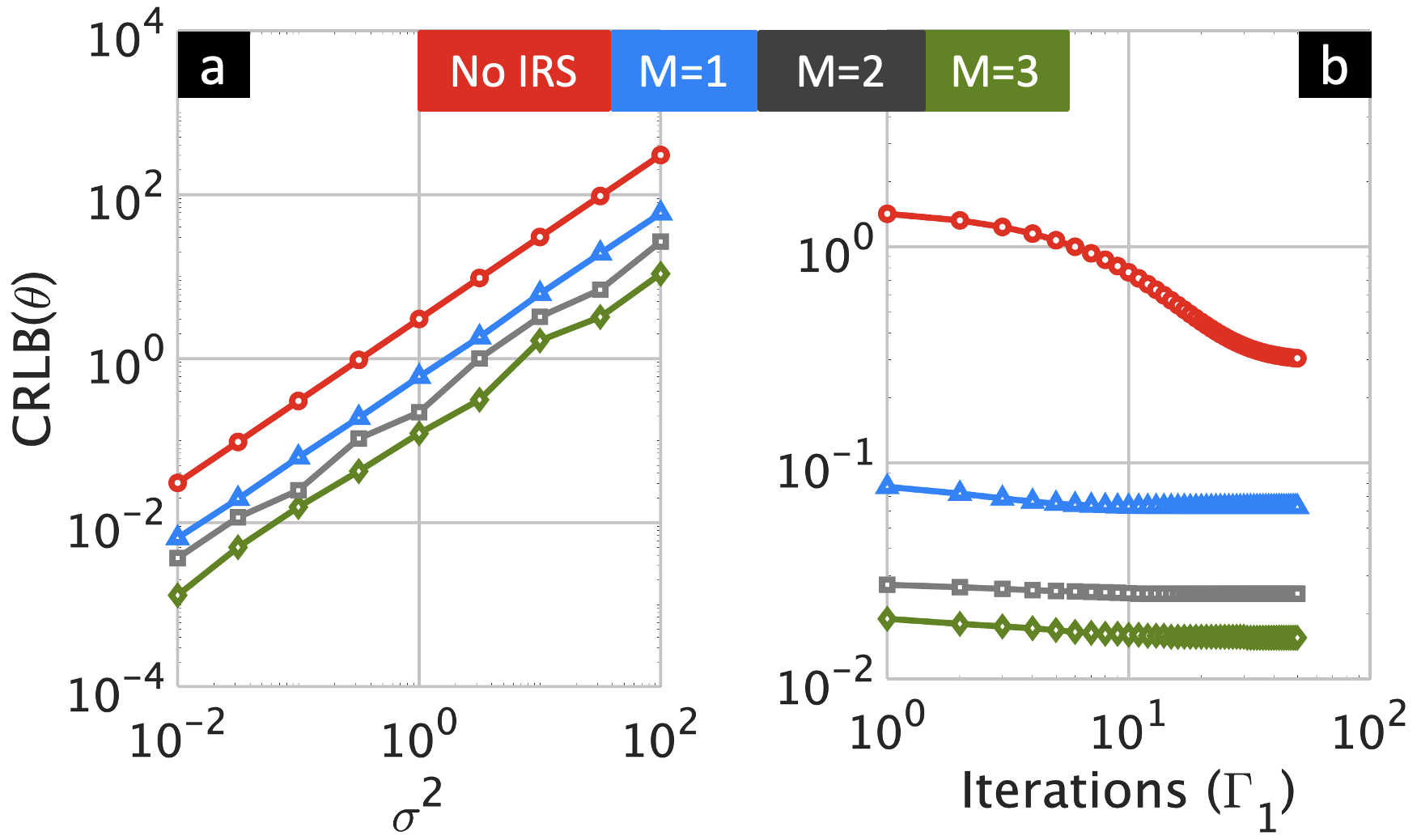}
	\caption{The optimized CRLB of DoA versus (a) $\sigma^2$ for fixed number of iterations $\Gamma_1 =50$, and (b) $\Gamma_1$ for fixed $\sigma^2=0.1$. In both experiments, $\Gamma_2$ was set to $20$. 
	}
 \label{fig_N}
\end{figure}
We consider a radar, equipped with  $N_r=N_t=8$ antennas for transmitter/receiver, positioned in the 2-D Cartesian plane at [$0$ m, $0$ m], sensing a target at [$5000$ m, $5000$ m]. We placed three IRS platforms with $N_m=8$ reflecting elements arranged as ULA  with the first elements located at [$500$ m, $500$ m], [$500$ m, $-800$ m], and [$300$ m, $1300$ m]. The IRS platforms were deployed at far ranges w.r.t. the radar. 
Usually, distant targets tend to have obstructed or very weak LoS signal. In these cases, the received signal from the NLoS paths, i.e., the signal propagated through IRS platforms, is helpful in boosting the reflected LoS echo strength. 

For a point-like target, given the  radar, target and IRS  positions,  the corresponding  radar--IRS$_{m}$ and  target--IRS$_m$ angles $\theta_{ir,m}$ and $\theta_{ti,m}$, for $m \in \{1,\ldots,M\}$, are obtained through geometric computations. The complex reflectivity coefficients $\{\alpha_{m}\}$, which correspond to a Swerling 0 target model are generated from a $\mathcal{CN}(0,1)$. In Algorithm 1, we set $\Gamma_1=50$ and $\Gamma_2=20$ for all iterations. Throughout all our experiments  the Lagrangian  multiplier $\eta$  is  tuned to $0.1$. Initially, all IRS platforms are  set to impose zero phase shift $\upnu_i^{(0)}=\bzero_{_{MN_m}}$ for $i\in \{1,2\}$. The number of slow-time samples  is set to $N=50$ and the samples in  $\mbX^{(0)}$ are generated from a normal distribution. Fig.~\ref{fig_N}a illustrates that the multiple IRS-aided radar outperforms the single-IRS aided radar. 
Further, Fig.~\ref{fig_N}b indicates that iterations of Algorithm 1 result in a monotonically decreasing CRLB.
\vspace{-8pt}
\section{Summary}\label{sec:conclusion}
\vspace{-8pt}
Waveform design for IRS-aided radar is relatively unexplored in prior works. In this context, this paper studies a new set of waveform design problems. 
Numerical experiments demonstrate 
that the deployment of multiple IRS platforms leads to a better achievable estimation performance 
compared to non-IRS and single-IRS systems. Some IRS model enhancements that should be accounted for in the future include the inter-IRS interference and quantization of the IRS phases.
\bibliographystyle{IEEEtran}
\bibliography{refs}
\end{document}